\documentclass{amsart}
\usepackage{amsmath,amssymb,multirow,color,placeins,graphicx,realboxes,afterpage,tikz,cite,verbatim}
\usepackage{etoolbox}

\numberwithin{equation}{section}

\newtheorem{prop}[equation]{Proposition}

\newtheorem{cor}[equation]{Corollary}

\theoremstyle{definition}
\newtheorem*{rmk}{Remark}

\newtheorem{defn}[equation]{Definition}

\newcommand{\F}{\mathbb{F}}
\newcommand{\bP}{\mathbb{P}}

\newcommand{\Z}{\mathbb{Z}}

\DeclareMathOperator{\charp}{char}

\renewcommand{\bar}[1]{#1\llap{$\overline{\phantom{\rm#1}}$}}

\usepackage[colorlinks,pagebackref,pdftex, bookmarks=false]{hyperref}

\begin{document}

\title{Tangent-Chebyshev rational maps and R\'edei functions}

\author{Zhiguo Ding}
\address{
  Hunan Institute of Traffic Engineering,
  Hengyang, Hunan 421001 China
}
\email{ding8191@qq.com}

\author{Michael E. Zieve}
\address{
  Department of Mathematics,
  University of Michigan,
  530 Church Street,
  Ann Arbor, MI 48109-1043 USA
}
\email{zieve@umich.edu}
\urladdr{http://www.math.lsa.umich.edu/$\sim$zieve/}

\date{\today}

\begin{abstract}
Recently Lima and Campello de Souza introduced a new class of rational functions over odd-order finite fields, and explained their potential usefulness in cryptography.  We show that these new functions are conjugate to the classical family of R\'edei rational functions, so that the properties of the new functions follow from properties of R\'edei functions.  We also prove new properties of these functions, and introduce analogous functions in characteristic $2$, while also introducing a new version of trigonometry over finite fields of even order,  which is of independent interest.
\end{abstract}

\thanks{
The second author thanks the National Science Foundation for support under grant DMS-1601844.}

\maketitle


\section{Introduction}

Recently Lima and Campello de Souza \cite{LS} introduced a family of rational functions over odd-characteristic finite fields, which they called tangent-Chebyshev maps.
They showed that these functions have properties  suitable for use in cryptography.  In this note we prove a relationship between these tangent-Chebyshev maps and a classical family of rational functions introduced by R\'edei \cite{R}, and prove several new properties of tangent-Chebyshev maps.  This allows one to deduce cryptographic information about tangent-Chebyshev maps from the known cryptographic results involving R\'edei functions \cite{BM,LM,LM2,MW,More,MN,MS,NRup1,NRup3}.
We also introduce characteristic $2$ analogues of these maps, based on a new characteristic $2$ version of finite field trigonometry which is of independent interest.


\section{R\'edei functions}

We begin with a quick development of the theory of R\'edei functions.  Let $q$ be a prime power, and let $h(x)\in\F_q[x]$ be an irreducible polynomial of the form
\[
\begin{array}{c@{\mathrel{}{}}l@{}}
x^2-\alpha &\qquad\text{ if $q$ is odd} \\[\smallskipamount]
x^2+x+\alpha&\qquad\text{ if $q$ is even}.\smallskip
\end{array}
\]
\noindent
It is known that for each $q$ there are exactly $\lfloor q/2\rfloor$ such
irreducible polynomials $h(x)$.
%
%
Let $\beta\in\F_{q^2}$ be a root of $h(x)$, so that $\bar\beta:=\beta^q$ is the other root of $h(x)$, namely
\[
\begin{array}{c@{\mathrel{}{}}l@{}}
\bar\beta=-\beta &\qquad\text{ if $q$ is odd} \\[\smallskipamount]
\bar\beta=\beta+1&\qquad\text{ if $q$ is even}.\smallskip
\end{array}
\]

\begin{defn}
For any positive integer $n$, the degree-$n$ \emph{R\'edei function} over\/ $\F_q$ with parameter $\alpha$ is
\[
R_n(x,\alpha):=\rho^{-1} \circ x^n \circ \rho
\]
where $\rho(x):=(x-\bar\beta)/(x-\beta)$ and $\rho^{-1}(x):=(\beta x - \bar\beta)/(x-1)$ are degree-one rational functions in $\F_{q^2}(x)$ such that $\rho^{-1}\circ\rho=x$.
\end{defn}

Explicitly,
\[
R_n(x,\alpha)=\frac{\beta (x-\bar\beta)^n - \bar\beta (x-\beta)^n}{(x-\bar\beta)^n-(x-\beta)^n}.
\]
Thus $R_n(x,\alpha)$ is unchanged if we interchange $\beta$ and $\bar\beta$, so that $R_n(x,\alpha)$ does not depend on the choice of $\beta$.  Moreover, since the $q$-th power map interchanges $\beta$ and $\bar\beta$, it fixes each coefficient of $R_n(x,\alpha)$, so that $R_n(x,\alpha)\in\F_q(x)$.

We compute the coefficients of $R_n(x,\alpha)$ via the binomial theorem.  For odd $q$ this yields
\[
R_n(x,\alpha)=\frac{\sum_{i=0}^{\lfloor n/2\rfloor}\binom{n}{2i} \alpha^i x^{n-2i}}
{\sum_{i=0}^{\lfloor (n-1)/2\rfloor} \binom{n}{2i+1} \alpha^i x^{n-2i-1}},
\]
while for even $q$ the function $R_n(x,\alpha)$ equals
\[
\frac{\sum_{i=0}^n \Bigl((\beta+1)\beta^i + \beta (\beta+1)^i\Bigr)\binom{n}{i} x^{n-i}}
{\sum_{i=0}^n \Bigl(\beta^i+(\beta+1)^i\Bigr)\binom{n}{i}x^{n-i}}.
\]

The function $\rho(x)$ induces a bijection from $\bP^1(\F_q):=\F_q\cup\{\infty\}$ to the set $\mu_{q+1}$ of $(q+1)$-th roots of unity in $\F_{q^2}$ \cite{ZR}.  Thus all properties of the function induced by $R_n(x,\alpha)$ on $\bP^1(\F_q)$ correspond to properties of the function induced by $X^n$ on $\mu_{q+1}$.  This may be formally expressed in terms of functional graphs:

\begin{defn} For any set $S$ and any function $f\colon S\to S$, the \emph{functional graph} of $f$ is the directed graph with vertices labelled by $S$, and with an edge from $s$ to $t$ if and only if $f(s)=t$.
\end{defn}

\begin{prop}
The functional graph of $R_n(x,\alpha)$ on\/ $\bP^1(\F_q)$ is isomorphic to the functional graph of $x^n$ on $\mu_{q+1}$.
\end{prop}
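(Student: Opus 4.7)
The strategy is to use $\rho$ itself as the graph isomorphism. Concretely, $\rho$ is a degree-one rational function in $\F_{q^2}(x)$, hence a bijection $\bP^1(\overline{\F_q}) \to \bP^1(\overline{\F_q})$; the cited fact from \cite{ZR} tells us that its restriction gives a bijection $\bP^1(\F_q) \to \mu_{q+1}$. So as a map of sets we already have a candidate bijection between the vertex sets of the two functional graphs, and what remains is to check that it respects the two edge relations.

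First I would observe that the claim to verify is: for $s,t \in \bP^1(\F_q)$, we have $R_n(s,\alpha) = t$ if and only if $\rho(s)^n = \rho(t)$. This is immediate from the defining identity
\[
R_n(x,\alpha) = \rho^{-1} \circ x^n \circ \rho,
\]
which, applied at the point $s$, reads $R_n(s,\alpha) = \rho^{-1}(\rho(s)^n)$, and from the fact that $\rho^{-1}$ is the honest inverse of $\rho$. One small bookkeeping step is to check that no pole issue arises: $\rho$ is a bijection of $\bP^1(\overline{\F_q})$, so each term in the composition is defined at the appropriate point once we work projectively.

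Next I would note that by the cited result $\rho(\bP^1(\F_q)) = \mu_{q+1}$ and, since $\mu_{q+1}$ is closed under the $n$-th power map, the above equivalence stays within the correct sets: if $s \in \bP^1(\F_q)$ then $\rho(s) \in \mu_{q+1}$, hence $\rho(s)^n \in \mu_{q+1}$, hence $t = \rho^{-1}(\rho(s)^n) \in \bP^1(\F_q)$. This shows $\rho$ is a bijection of vertex sets that carries edges of the functional graph of $R_n(x,\alpha)$ to edges of the functional graph of $x^n$, and conversely, giving the desired isomorphism.

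The only genuine content is the cited bijection $\rho \colon \bP^1(\F_q) \to \mu_{q+1}$; everything else is formal manipulation of the conjugation relation. So there is no real obstacle, but if pressed I would briefly justify that bijection by noting that for $s \in \bP^1(\F_q)$ one has $\rho(s)^{q+1} = \rho(s)^q \cdot \rho(s) = \bar\rho(s^q) \cdot \rho(s) = \bar\rho(s)\cdot\rho(s) = 1$, using that conjugation by the $q$-Frobenius swaps $\beta$ and $\bar\beta$ and hence swaps $\rho$ with $\rho^{-1}$ up to the involutive structure, together with an injectivity count to identify the image as all of $\mu_{q+1}$.
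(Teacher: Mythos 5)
Your proposal is correct and matches the paper's (implicit) argument exactly: the paper also treats the proposition as an immediate consequence of the conjugation $R_n(x,\alpha)=\rho^{-1}\circ x^n\circ\rho$ together with the fact, cited from the same source, that $\rho$ maps $\bP^1(\F_q)$ bijectively onto $\mu_{q+1}$. Your closing verification that $\rho(s)^{q+1}=1$ for $s\in\bP^1(\F_q)$, via $\rho(s)^q=1/\rho(s)$ and an injectivity-plus-counting argument, is a sound way to justify the one cited ingredient.
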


\begin{cor}
The function $R_n(x,\alpha)$ permutes\/ $\bP^1(\F_q)$ if and only if $\gcd(n,q+1)=1$.
\end{cor}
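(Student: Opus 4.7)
The plan is to deduce this immediately from the preceding proposition. Since a function permutes a finite set if and only if it is injective (equivalently, surjective), and since the functional graph of a function on a finite set determines whether that function is a permutation (a function is a permutation exactly when every vertex of its functional graph has in-degree $1$), the proposition tells us that $R_n(x,\alpha)$ permutes $\bP^1(\F_q)$ if and only if the $n$-th power map permutes $\mu_{q+1}$.

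Next I would use the fact that $\mu_{q+1}$ is a cyclic group of order $q+1$, since it is a finite subgroup of $\F_{q^2}^\times$. The $n$-th power map on a finite cyclic group $G$ of order $m$ is a homomorphism whose kernel consists of elements of order dividing $\gcd(n,m)$. This map is injective, hence bijective, if and only if $\gcd(n,m) = 1$. Applying this with $m = q+1$ gives the equivalence claimed.

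There is no real obstacle here; the entire content is already packaged in the proposition, together with the elementary observation about power maps on cyclic groups. The only small thing to be careful about is that $\mu_{q+1}$ truly has order $q+1$ rather than a proper divisor, but this is immediate because $\F_{q^2}^\times$ is cyclic of order $q^2 - 1 = (q-1)(q+1)$ and therefore contains a (unique) cyclic subgroup of order $q+1$.
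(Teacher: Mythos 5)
Your proposal is correct and follows exactly the paper's route: reduce via the functional-graph proposition to whether $x^n$ permutes $\mu_{q+1}$, then invoke the standard fact that the $n$-th power map permutes a cyclic group of order $q+1$ precisely when $\gcd(n,q+1)=1$. The paper leaves these two steps implicit in a single sentence, so your extra care (in-degree characterization of permutations, verifying $\mu_{q+1}$ has full order $q+1$) merely fills in routine details rather than changing the argument.
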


\begin{proof}
We see that $R_n(x,\alpha)$ permutes $\bP^1(\F_q)$ if and only if $x^n$ permutes $\mu_{q+1}$, or equivalently $\gcd(n,q+1)=1$.
\end{proof}

\begin{prop}
The functions $R_n(x,\alpha)$ and $R_m(x,\alpha)$ commute under composition.
\end{prop}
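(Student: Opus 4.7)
The plan is to reduce the statement to the obvious fact that the monomials $x^n$ and $x^m$ commute under composition, by exploiting the conjugation description of R\'edei functions from the definition.

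Concretely, the R\'edei function is defined as $R_n(x,\alpha) = \rho^{-1}\circ x^n\circ\rho$, and the definition also records the key identity $\rho^{-1}\circ\rho = x$, i.e.\ $\rho^{-1}$ is a two-sided inverse of $\rho$ as a rational function. Using associativity of composition of rational functions and collapsing the middle $\rho\circ\rho^{-1}$, I would compute
\[
R_n(x,\alpha)\circ R_m(x,\alpha) = \rho^{-1}\circ x^n\circ \rho\circ \rho^{-1}\circ x^m\circ\rho = \rho^{-1}\circ x^{nm}\circ\rho = R_{nm}(x,\alpha).
\]
The same computation with $n$ and $m$ swapped yields $R_m(x,\alpha)\circ R_n(x,\alpha) = R_{mn}(x,\alpha)$, and since $nm=mn$ the two composites are equal.

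Since the argument is essentially a one-line manipulation, there is no real obstacle. The only point that deserves a brief remark is that the composition of rational functions used here is the formal composition in $\F_{q^2}(x)$ (so that $\rho\circ\rho^{-1}$ really is the identity as a rational function and we need not worry about evaluation at poles); the resulting identity of rational functions then descends to $\F_q(x)$ because, as already observed in the excerpt, each $R_k(x,\alpha)$ in fact lies in $\F_q(x)$. As a bonus, the same proof establishes the stronger statement $R_n\circ R_m = R_{nm}$, which will presumably be useful in the sequel.
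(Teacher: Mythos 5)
Your proof is correct and follows essentially the same route as the paper, which likewise deduces the result by conjugation from the fact that $x^n$ and $x^m$ commute; your explicit collapse of $\rho\circ\rho^{-1}$ (valid since degree-one rational functions form a group, so the recorded one-sided inverse is two-sided) just spells out what the paper leaves implicit. The bonus identity $R_n\circ R_m=R_{nm}$ is a nice touch and is indeed used implicitly later in the paper (e.g.\ in the remark noting $R_{2n}(x,\alpha)=R_n(x,\alpha)\circ R_2(x,\alpha)$).
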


\begin{proof}
This follows from the fact that $x^n$ and $x^m$ commute under composition.
\end{proof}

We now prove the following identity which seems to be new:

\begin{prop}\label{fe}
For any $m,n>0$, the function $R_{m+n}(x,\alpha)$ equals
\[
\frac{R_m(x,\alpha)R_n(x,\alpha)+\alpha}
{R_m(x,\alpha)+R_n(x,\alpha)-(\beta+\bar\beta)}.
\]
\end{prop}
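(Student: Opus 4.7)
The plan is to exploit the conjugation formula $R_k(x,\alpha) = \rho^{-1} \circ x^k \circ \rho$ and reduce the claimed identity to a purely algebraic manipulation involving the M\"obius transformation $\rho$. Setting $u := R_m(x,\alpha)$ and $v := R_n(x,\alpha)$, the conjugation gives $\rho(u) = \rho(x)^m$ and $\rho(v) = \rho(x)^n$, hence
\[
\rho(u)\rho(v) = \rho(x)^{m+n} = \rho\bigl(R_{m+n}(x,\alpha)\bigr),
\]
so $R_{m+n}(x,\alpha) = \rho^{-1}(\rho(u)\rho(v))$. Thus it suffices to verify, as a rational identity in two independent variables $u$ and $v$, that
\[
\rho^{-1}\bigl(\rho(u)\rho(v)\bigr) = \frac{uv+\alpha}{u+v-(\beta+\bar\beta)}.
\]

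The computation then proceeds directly. Expanding
\[
\rho(u)\rho(v) = \frac{(u-\bar\beta)(v-\bar\beta)}{(u-\beta)(v-\beta)}
\]
and substituting into $\rho^{-1}(s)=(\beta s - \bar\beta)/(s-1)$, the common factor $(u-\beta)(v-\beta)$ cancels top and bottom, and a further factor of $\beta-\bar\beta$ can be pulled out of both the resulting numerator and denominator. What remains is exactly
\[
\frac{uv-\beta\bar\beta}{u+v-(\beta+\bar\beta)}.
\]

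The final step is the characteristic-dependent identification $-\beta\bar\beta=\alpha$. In odd characteristic, $\bar\beta=-\beta$ and $\beta^2=\alpha$, so $\beta\bar\beta=-\alpha$. In characteristic~$2$, $\bar\beta=\beta+1$ and $\beta^2+\beta+\alpha=0$ give $\beta\bar\beta=\beta^2+\beta=\alpha$, which coincides with $-\alpha$ in characteristic~$2$. In either case $-\beta\bar\beta=\alpha$, which completes the proof. Since the argument reduces to a single routine simplification after the conjugation, there is no serious obstacle; the only care needed is to track the cancellation of $\beta-\bar\beta$ and to confirm the formula $-\beta\bar\beta=\alpha$ uniformly across both parities of $q$.
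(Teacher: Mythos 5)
Your proposal is correct and follows essentially the same route as the paper's own proof: both derive $\rho(R_{m+n}(x,\alpha))=\rho(R_m(x,\alpha))\cdot\rho(R_n(x,\alpha))$ from the conjugation $R_k(x,\alpha)=\rho^{-1}\circ x^k\circ\rho$ and then compose with $\rho^{-1}$. The paper leaves the final simplification implicit, whereas you carry it out explicitly (the cancellation of $(u-\beta)(v-\beta)$ and of the factor $\beta-\bar\beta$, plus the verification that $-\beta\bar\beta=\alpha$ in both characteristics), which is a harmless elaboration of the same argument.
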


%
%

\begin{rmk}
In the above expression, we note that $\beta+\bar\beta$ is $0$ for odd $q$ and $1$ for even $q$.
\end{rmk}

\begin{proof}
We compute
\begin{align*}
\rho\,\circ\,& R_{m+n}(x,\alpha) = \rho(x)^{m+n} \\
&=\rho(x)^m\cdot\rho(x)^n \\
&=\rho(R_m(x,\alpha))\cdot\rho(R_n(x,\alpha)) \\
&=\frac{R_m(x,\alpha)-\bar\beta}{R_m(x,\alpha)-\beta}\cdot\frac{R_n(x,\alpha)-\bar\beta}{R_n(x,\alpha)-\beta},
\end{align*}
and the result follows by composing on the left with $\rho^{-1}(x)$.
\end{proof}

\begin{rmk}
For odd $q$, the case $m=n$ of Proposition~\ref{fe} is essentially the case $y=x$ of the identity of bivariate rational functions in \cite[\S 4]{C}, after one observes that $R_{2n}(x,\alpha)=R_n(x,\alpha)\circ R_2(x,\alpha)$.  It would be interesting to explore bivariate versions of other special cases of Proposition~\ref{fe}.
\end{rmk}

\begin{rmk}
One can show that if $n$ is not a power of the characteristic of $\F_q$ then distinct values $\alpha$ yield distinct rational functions $R_n(x,\alpha)$.  Thus, for any such $q$ and $n$, the number of degree-$n$ R\'edei functions over $\F_q$ is the number of possibilities for $\alpha$, namely $\lfloor q/2\rfloor$.
%
\end{rmk}

The arguments in this section immediately extend to the case that $h(x)$ is an arbitrary irreducible degree-$2$ polynomial in $\F_q[x]$.  We required $h(x)$ to be $x^2-\alpha$ or $x^2+x+\alpha$ since these are the polynomials occurring in the original papers on the topic \cite{N2,R}, and also these yield simpler expressions for the coefficients of $R_n(x,\alpha)$.


\section{Tangent-Chebyshev functions}

The tangent-Chebyshev rational functions $C_n(x,\alpha)$ were defined in \cite{LS} for any odd prime power $q$ via a somewhat involved procedure.  This led to the explicit expression \cite[(9)]{LS}, which we use as our definition:

\begin{defn}
Let $n$ be a positive integer, let $q$ be an odd prime power, and let $\alpha$ be a nonsquare in $\F_q$.  Then the degree-$n$ tangent-Chebyshev rational function over $\F_q$ with parameter $\alpha$ is
\[
C_n(x,\alpha):=\frac{\sum_{i=0}^{\lfloor (n-1)/2\rfloor} \binom{n}{2i+1}\alpha^i x^{2i+1}}
{\sum_{i=0}^{\lfloor n/2\rfloor}\binom{n}{2i} \alpha^i x^{2i}}.
\]
\end{defn}

Comparing this definition to the explicit expression for $R_n(x,\alpha)$ yields the following new and fundamental equality:
\begin{equation}\label{CR}
C_n(x,\alpha)=\frac1x\circ R_n(x,\alpha)\circ\frac1x.
\end{equation}

Conversely, it is natural to take \eqref{CR} to be the definition of $C_n(x,\alpha)$ in case $q$ is an even prime power and $\alpha\in\F_q$ equals $\beta^2+\beta$ with $\beta\in\F_{q^2}\setminus\F_q$.
By the results of the previous section, in this case $C_n(x,\alpha)$ equals
\[
\frac
{\sum_{i=0}^n \Bigl(\beta^i+(\beta+1)^i\Bigr)\binom{n}{i}x^{i}}
{\sum_{i=0}^n \Bigl((\beta+1)\beta^i + \beta (\beta+1)^i\Bigr)\binom{n}{i} x^{i}}.
\]

In the rest of this section, $q$ is any prime power (odd or even), and $\alpha$, $\beta$, $\bar\beta$ are as at the start of the previous section.
We now present consequences of \eqref{CR}.

\begin{cor}\label{7}
We have
\[
C_n(x,\alpha)=\eta^{-1}\circ x^n\circ\eta
\]
where $\eta(x):=\rho(1/x)=(\bar\beta x-1)/(\beta x-1)$ and $\eta^{-1}(x)=1/\rho^{-1}(x)=(x-1)/(\beta x-\bar\beta)$.  Here $\eta^{-1}\circ\eta=x$, and $\eta$ maps\/ $\bP^1(\F_q)$ bijectively onto $\mu_{q+1}$.
\end{cor}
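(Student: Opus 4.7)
The plan is to substitute the identity $R_n(x,\alpha)=\rho^{-1}\circ x^n\circ\rho$ from the definition of R\'edei functions directly into the new identity \eqref{CR}. This gives
\[
C_n(x,\alpha)=\tfrac1x\circ\rho^{-1}\circ x^n\circ\rho\circ\tfrac1x,
\]
so the whole task reduces to recognizing the outer factors. By definition $\eta(x)=\rho(1/x)$, which is exactly $\rho\circ(1/x)$ as a composition of rational functions, and the stated formula $\eta^{-1}(x)=1/\rho^{-1}(x)$ is exactly $(1/x)\circ\rho^{-1}$. Substituting these into the display yields $C_n(x,\alpha)=\eta^{-1}\circ x^n\circ\eta$.

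Next I would verify the explicit expressions for $\eta$ and $\eta^{-1}$ by direct computation: plugging $1/x$ into $\rho(x)=(x-\bar\beta)/(x-\beta)$ and clearing denominators gives $\eta(x)=(\bar\beta x-1)/(\beta x-1)$, while inverting $\rho^{-1}(x)=(\beta x-\bar\beta)/(x-1)$ gives $\eta^{-1}(x)=(x-1)/(\beta x-\bar\beta)$. The relation $\eta^{-1}\circ\eta=x$ then follows formally:
\[
\eta^{-1}\circ\eta=\tfrac1x\circ\rho^{-1}\circ\rho\circ\tfrac1x=\tfrac1x\circ\tfrac1x=x,
\]
using only $\rho^{-1}\circ\rho=x$ from the original definition.

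Finally, the bijection claim follows because $\eta=\rho\circ(1/x)$ is a composition of two bijections: the map $x\mapsto 1/x$ permutes $\bP^1(\F_q)$, and $\rho$ was already noted (citing \cite{ZR}) to induce a bijection from $\bP^1(\F_q)$ onto $\mu_{q+1}$. Hence $\eta$ restricts to a bijection $\bP^1(\F_q)\to\mu_{q+1}$.

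There is essentially no obstacle here; the only thing to be careful about is the order of composition and the interpretation of the formulas $\eta(x)=\rho(1/x)$ and $\eta^{-1}(x)=1/\rho^{-1}(x)$ as compositions with the involution $1/x$ on the appropriate side, which is what makes the telescoping $\tfrac1x\circ\rho^{-1}\circ\rho\circ\tfrac1x=x$ work out cleanly.
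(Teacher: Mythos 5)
Your proof is correct and is precisely the argument the paper intends: the paper states Corollary~\ref{7} without proof as an immediate consequence of \eqref{CR} and the definition $R_n(x,\alpha)=\rho^{-1}\circ x^n\circ\rho$, and your substitution $C_n=\bigl(\tfrac1x\circ\rho^{-1}\bigr)\circ x^n\circ\bigl(\rho\circ\tfrac1x\bigr)=\eta^{-1}\circ x^n\circ\eta$, together with the bijectivity of $\eta$ as the composition of the involution $1/x$ on $\bP^1(\F_q)$ with the bijection $\rho\colon\bP^1(\F_q)\to\mu_{q+1}$ from \cite{ZR}, fills in exactly that reasoning. No gaps; the attention you pay to composition order is the only point of care, and you handle it correctly.
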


\begin{cor}\label{LS7}
The rational functions $C_m(x,\alpha)$ and $C_n(x,\alpha)$ commute under composition.
\end{cor}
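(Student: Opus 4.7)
The plan is to deduce commutativity of the tangent-Chebyshev functions directly from the conjugation formula in Corollary~\ref{7}. Since $C_n(x,\alpha)=\eta^{-1}\circ x^n\circ\eta$ and the monomial maps $x^m$ and $x^n$ commute under composition, conjugation by $\eta$ transports this commutativity to $C_m$ and $C_n$.

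Explicitly, I would compute
\[
C_m(x,\alpha)\circ C_n(x,\alpha) = \eta^{-1}\circ x^m\circ\eta\circ\eta^{-1}\circ x^n\circ\eta = \eta^{-1}\circ x^{mn}\circ\eta,
\]
using that $\eta\circ\eta^{-1}=x$ (which is part of Corollary~\ref{7}). The analogous computation for $C_n\circ C_m$ gives the same expression $\eta^{-1}\circ x^{mn}\circ\eta$, establishing the identity $C_m\circ C_n = C_n\circ C_m$ (and incidentally showing both compositions equal $C_{mn}(x,\alpha)$).

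Alternatively, one can reduce to the R\'edei case: by \eqref{CR}, $C_k(x,\alpha)=\tfrac{1}{x}\circ R_k(x,\alpha)\circ\tfrac{1}{x}$, so conjugation by $1/x$ transports the commutativity of $R_m$ and $R_n$ (already established earlier in the paper) to $C_m$ and $C_n$. Either route yields a one-line proof; there is no real obstacle, the work having been done in setting up the conjugation description of $C_n$.
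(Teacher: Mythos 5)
Your proposal is correct and is essentially the paper's own proof: the paper deduces the result in one line from Corollary~\ref{7} using the fact that $x^m$ and $x^n$ commute, which is exactly your first computation written out in detail (your incidental identification of both compositions with $C_{mn}(x,\alpha)$ and the equally valid detour through \eqref{CR} are harmless extras). The only nitpick is that Corollary~\ref{7} literally asserts $\eta^{-1}\circ\eta=x$ rather than $\eta\circ\eta^{-1}=x$, but since $\eta$ is a degree-one rational map the two are equivalent, so nothing is missing.
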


\begin{proof}
This follows from the previous corollary since $x^m$ and $x^n$ commute.
\end{proof}

\begin{cor}\label{9}
The function $C_n(x,\alpha)$ permutes\/ $\bP^1(\F_q)$ if and only if $\gcd(n,q+1)=1$.
\end{cor}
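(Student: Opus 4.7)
The plan is to reduce this immediately to the corresponding statement about the map $x\mapsto x^n$ on the cyclic group $\mu_{q+1}$, exactly as in the R\'edei case (Corollary to Proposition on $R_n$). By Corollary~\ref{7} we have the conjugacy
\[
C_n(x,\alpha)=\eta^{-1}\circ x^n\circ\eta,
\]
where $\eta$ induces a bijection $\bP^1(\F_q)\to\mu_{q+1}$. Since $\eta$ is a bijection, the composition $\eta^{-1}\circ x^n\circ\eta$ permutes $\bP^1(\F_q)$ if and only if $x\mapsto x^n$ permutes $\mu_{q+1}$.

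The second step is to note that $\mu_{q+1}$ is a cyclic group of order $q+1$, so the $n$-th power map is a permutation of $\mu_{q+1}$ precisely when $\gcd(n,q+1)=1$. Combining these two equivalences yields the claim.

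There is essentially no obstacle here: the only thing to check is that the bijection statement in Corollary~\ref{7} genuinely applies in both characteristics (for odd $q$ it is the content of \cite{ZR}, and for even $q$ it follows by the same argument since $\rho$ sends $\bP^1(\F_q)$ bijectively onto $\mu_{q+1}$ and $\eta=\rho\circ(1/x)$ with $x\mapsto 1/x$ a permutation of $\bP^1(\F_q)$). So the proof can be written in a single short paragraph mirroring the one for Corollary to the R\'edei statement.
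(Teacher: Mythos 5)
Your proof is correct and follows exactly the paper's own argument: both invoke the conjugacy $C_n(x,\alpha)=\eta^{-1}\circ x^n\circ\eta$ from Corollary~\ref{7} together with the fact that $\eta$ maps $\bP^1(\F_q)$ bijectively onto the cyclic group $\mu_{q+1}$, reducing the claim to the standard criterion that $x^n$ permutes a cyclic group of order $q+1$ precisely when $\gcd(n,q+1)=1$. Your extra remark verifying the bijectivity of $\eta$ in both characteristics is a sensible double-check but not a point of divergence from the paper.
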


\begin{proof}
Since $\eta$ maps $\bP^1(\F_q)$ bijectively onto $\mu_{q+1}$, we see that $C_n(x,\alpha)$ permutes $\bP^1(\F_q)$ if and only if $x^n$ permutes $\mu_{q+1}$, or equivalently $\gcd(n,q+1)=1$.
\end{proof}

\begin{cor}\label{LS10}
For odd $q$, the function $C_n(x,\alpha)$ permutes\/ $\F_q$ if and only if $\gcd(n,q+1)=1$.
\end{cor}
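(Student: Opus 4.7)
The plan is to reduce the question to the multiplicative structure of $\mu_{q+1}$, via the conjugation $C_n=\eta^{-1}\circ x^n\circ\eta$ from Corollary~\ref{7}. The crucial observation is that for odd $q$ we have $\bar\beta=-\beta$, so $\eta(\infty)=\bar\beta/\beta=-1$, and $-1\in\mu_{q+1}$ since $q+1$ is even. Hence the bijection $\eta\colon\bP^1(\F_q)\to\mu_{q+1}$ from Corollary~\ref{7} restricts to a bijection $\F_q\to\mu_{q+1}\setminus\{-1\}$, and therefore $C_n$ permutes $\F_q$ if and only if $x\mapsto x^n$ permutes $\mu_{q+1}\setminus\{-1\}$.

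For the ``if'' direction, assume $\gcd(n,q+1)=1$. Then $x^n$ permutes the cyclic group $\mu_{q+1}$. Because $q+1$ is even, the coprimality hypothesis forces $n$ to be odd, so $(-1)^n=-1$; hence $x^n$ fixes $-1$, and removing this fixed point yields a permutation of $\mu_{q+1}\setminus\{-1\}$, so $C_n$ permutes $\F_q$.

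For the ``only if'' direction, suppose $d:=\gcd(n,q+1)>1$. The image of $x\mapsto x^n$ on the cyclic group $\mu_{q+1}$ is the subgroup of index $d$, of cardinality $(q+1)/d\le(q+1)/2<q$ (using $q\ge 3$). Since $|\mu_{q+1}\setminus\{-1\}|=q$, the $n$-th power map cannot even surject onto $\mu_{q+1}\setminus\{-1\}$, let alone permute it, so $C_n$ does not permute $\F_q$.

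I do not expect a genuine obstacle here: everything is bookkeeping once one sees that $\eta(\infty)=-1$, which is precisely what lets one translate Corollary~\ref{9} (a statement about $\bP^1(\F_q)$) into the desired statement about $\F_q$. The only point requiring a little care is to interpret ``$C_n$ permutes $\F_q$'' as incorporating the condition that no element of $\F_q$ has image $\infty$, and this is exactly what is captured by removing $-1$ from $\mu_{q+1}$ on the multiplicative side.
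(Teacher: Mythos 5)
Your proof is correct, and it takes a genuinely different route from the paper's, which argues by cases on the parity of $n$: for odd $n$ the paper notes that $C_n(x,\alpha)$ fixes $\infty$, so that permuting $\F_q$ is equivalent to permuting $\bP^1(\F_q)$, and then quotes Corollary~\ref{9}; for even $n$ it tries to finish with a single evaluation, asserting $C_n(0,\alpha)=\infty$. You instead make the one computation $\eta(\infty)=\bar\beta/\beta=-1$ (valid for odd $q$ since $\bar\beta=-\beta$, with $-1\in\mu_{q+1}$ because $q+1$ is even), so that $\eta$ restricts to a bijection from $\F_q$ onto $\mu_{q+1}\setminus\{-1\}$, and the whole statement becomes: $x^n$ permutes $\mu_{q+1}\setminus\{-1\}$ if and only if $\gcd(n,q+1)=1$. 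Your ``if'' direction (coprimality to the even number $q+1$ forces $n$ odd, so $-1$ is a fixed point of the permutation $x^n$ of $\mu_{q+1}$) and your ``only if'' direction (for $d:=\gcd(n,q+1)>1$ the image of $x^n$ lies in a subgroup of order $(q+1)/d\le(q+1)/2<q$, so surjectivity onto the $q$-element set fails) are both sound, and they correctly handle the delicate point that ``permutes $\F_q$'' must also exclude some element of $\F_q$ mapping to $\infty$ --- your surjectivity count absorbs that automatically. In fact your argument is more robust than the paper's: the paper's even-$n$ step is misstated, since the numerator of $C_n(x,\alpha)$ vanishes at $x=0$ while its denominator equals $1$ there, so $C_n(0,\alpha)=0$ for every $n$; what is true for even $n$ is $C_n(\infty,\alpha)=\eta^{-1}\bigl((-1)^n\bigr)=\eta^{-1}(1)=0$, and $\infty\mapsto 0$ does not by itself preclude the restriction of $C_n$ to $\F_q$ from being a bijection of $\F_q$. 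Your counting argument supplies exactly the justification missing there; what the paper's route buys in exchange is brevity in the odd-$n$ case, where Corollary~\ref{9} is cited directly instead of re-deriving the coprimality equivalence inside $\mu_{q+1}$.
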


\begin{proof}
If $n$ is even then $\gcd(n,q+1)$ is even and 
$C_n(x,\alpha)$ does not permute $\F_q$ since
$C_n(0,\alpha)=\infty$.
Now suppose $n$ is odd.  Then $C_n(x,\alpha)$ 
fixes $\infty$, and hence permutes $\F_q$ if and only if it permutes $\bP^1(\F_q)$, which says $\gcd(n,q+1)$ by the previous result.
\end{proof}

\begin{cor}\label{8}
The functional graph of $C_n(x,\alpha)$ on\/ $\bP^1(\F_q)$ is isomorphic to the functional graph of $x^n$ on $\mu_{q+1}$.
\end{cor}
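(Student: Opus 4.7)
The plan is to observe that Corollary~\ref{7} already does all the real work: it provides a bijection $\eta\colon\bP^1(\F_q)\to\mu_{q+1}$ together with the conjugacy $C_n(x,\alpha)=\eta^{-1}\circ x^n\circ\eta$. I will simply verify that this $\eta$ is the desired isomorphism of functional graphs.

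Concretely, I would argue as follows. For any $s,t\in\bP^1(\F_q)$, the conjugacy gives
\[
C_n(s,\alpha)=t\iff \eta^{-1}(\eta(s)^n)=t \iff \eta(s)^n=\eta(t),
\]
where the last equivalence uses that $\eta^{-1}$ is a well-defined inverse (again from Corollary~\ref{7}). Thus an edge $s\to t$ is present in the functional graph of $C_n(x,\alpha)$ on $\bP^1(\F_q)$ if and only if the edge $\eta(s)\to\eta(t)$ is present in the functional graph of $x^n$ on $\mu_{q+1}$. Since $\eta$ is a bijection between the vertex sets, this is exactly what it means for the two directed graphs to be isomorphic via the vertex relabeling $s\mapsto\eta(s)$.

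There is essentially no obstacle here: the potential difficulty, namely constructing a bijection $\bP^1(\F_q)\to\mu_{q+1}$ that conjugates $C_n$ to $x^n$, was already settled in Corollary~\ref{7}. The present corollary is therefore immediate and could be presented as a one-line consequence of Corollary~\ref{7}, parallel to the way the earlier Proposition~3.3 was deduced from the fact that $\rho$ carries $\bP^1(\F_q)$ bijectively onto $\mu_{q+1}$ while conjugating $R_n(x,\alpha)$ to $x^n$.
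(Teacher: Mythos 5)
Your proof is correct and is exactly the argument the paper intends: the paper states Corollary~\ref{8} without a written proof, treating it as immediate from Corollary~\ref{7}, since the bijection $\eta\colon\bP^1(\F_q)\to\mu_{q+1}$ conjugating $C_n(x,\alpha)$ to $x^n$ is precisely the required isomorphism of functional graphs. Your edge-by-edge verification via $C_n(s,\alpha)=t\iff\eta(s)^n=\eta(t)$ just spells out this one-line deduction.
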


\begin{cor}\label{LS11}
The function $C_n(x,\alpha)$ induces an involution on\/ $\bP^1(\F_q)$ if and only if $n^2\equiv 1\pmod{q+1}$.
\end{cor}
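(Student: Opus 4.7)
The plan is to reduce the statement to the corresponding statement for the power map $x^n$ on $\mu_{q+1}$ via the conjugation identity of Corollary~\ref{7}, and then read off the condition on $n$ from the fact that $\mu_{q+1}$ is cyclic of order $q+1$.

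First, by Corollary~\ref{7} we have $C_n(x,\alpha) = \eta^{-1}\circ x^n \circ \eta$ with $\eta$ a bijection from $\bP^1(\F_q)$ onto $\mu_{q+1}$. Composing this with itself gives $C_n \circ C_n = \eta^{-1}\circ x^{n^2}\circ \eta$. Thus $C_n(x,\alpha)$ induces an involution on $\bP^1(\F_q)$ if and only if $x^{n^2}$ induces the identity on $\mu_{q+1}$.

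Second, since $\mu_{q+1}$ is cyclic of order $q+1$, the map $\zeta\mapsto \zeta^{n^2}$ is the identity on $\mu_{q+1}$ precisely when $q+1$ divides $n^2-1$, that is, $n^2\equiv 1\pmod{q+1}$. Combining these two observations yields the corollary.

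No step here poses a real obstacle; the entire argument is a direct corollary of Corollary~\ref{7} and the structure of the cyclic group $\mu_{q+1}$. The only minor caveat is the convention that ``involution'' includes the identity map (as is standard when one says a function ``induces an involution''); under this convention the equivalence is an ``if and only if'' with no exceptions, and otherwise one would simply exclude the case $n\equiv \pm 1 \pmod{q+1}$ with $n\equiv 1$ giving the identity on $\mu_{q+1}$.
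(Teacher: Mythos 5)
Your proposal is correct and takes essentially the same approach as the paper's own proof: both conjugate by the bijection $\eta$ of Corollary~\ref{7} to reduce the question to whether $x^{n^2}$ induces the identity on $\mu_{q+1}$, and then use the fact that $\mu_{q+1}$ is cyclic of order $q+1$ to conclude that this holds precisely when $(q+1)\mid(n^2-1)$. Your caveat about the identity map counting as an involution is consistent with the paper's implicit convention, so no adjustment is needed.
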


\begin{proof}
The function $C_n(x,\alpha)$ induces an involution on $\bP^1(\F_q)$ if and only if $x^n$ induces an involution on $\mu_{q+1}$, or equivalently $x^n\circ x^n=x^{n^2}$ induces the identity map on $\mu_{q+1}$.  This says $x^{n^2-1}$ maps each element of $\mu_{q+1}$ to $1$, which says $(q+1)\mid (n^2-1)$ since $\mu_{q+1}$ is a cyclic group of order $q+1$.
\end{proof}

The above corollaries include generalizations of the main results from \cite{LS}.  Specifically, \cite[Prop.~7]{LS} is the odd $q$ case of Corollary~\ref{LS7}; \cite[Prop.~11]{LS} is the odd $q$ case of Corollary~\ref{LS11}; \cite[Prop.~10]{LS} is Corollary~\ref{LS10}, which follows from the odd $q$ case of Corollary~\ref{9};
and \cite[Prop.~5 and 6]{LS} describe the zeroes and poles of $C_n(x)$ for odd $q$, which is immediate for all $q$ from Corollary~\ref{7}.
Corollaries~\ref{7} and \ref{8} are new, and explain much about the functions $C_n(x,\alpha)$.

Next, Proposition~\ref{fe} yields the following new identity, which seems useful for calculating values of $C_n(x,\alpha)$ and for analyzing cryptosystems built from those values:

\begin{prop}\label{Cadd}
For any $m,n>0$, we have
\[
C_{m+n}(x,\alpha) = H(C_m(x,\alpha),\, C_n(x,\alpha))
\]
where
\[
H(x,y):=\frac{x+y-(\beta+\bar\beta)xy}{1+\alpha xy}.
\]
\end{prop}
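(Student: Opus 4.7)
The plan is to deduce this identity directly from Proposition~\ref{fe} by conjugating with $1/x$, exactly as \eqref{CR} defines $C_n$ from $R_n$. From \eqref{CR} we have $R_n(x,\alpha) = \frac{1}{x} \circ C_n(x,\alpha) \circ \frac{1}{x}$, which (evaluating at $1/x$) gives the key relation
\[
R_n(1/x,\alpha) = \frac{1}{C_n(x,\alpha)}
\]
for every positive integer $n$. I would introduce the shorthand $u := C_m(x,\alpha)$ and $v := C_n(x,\alpha)$, so that $R_m(1/x,\alpha) = 1/u$ and $R_n(1/x,\alpha) = 1/v$.

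Next I would substitute $1/x$ for $x$ in Proposition~\ref{fe}, obtaining
\[
R_{m+n}(1/x,\alpha) = \frac{(1/u)(1/v) + \alpha}{(1/u)+(1/v) - (\beta+\bar\beta)}.
\]
Multiplying numerator and denominator by $uv$ rewrites the right side as
\[
\frac{1+\alpha uv}{u+v-(\beta+\bar\beta)uv}.
\]
Since the left side equals $1/C_{m+n}(x,\alpha)$, taking reciprocals yields
\[
C_{m+n}(x,\alpha) = \frac{u+v-(\beta+\bar\beta)uv}{1+\alpha uv} = H(u,v),
\]
which is the claimed identity.

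The only real content is the observation $R_n(1/x,\alpha)=1/C_n(x,\alpha)$; after that the proof is a single line of algebraic simplification, so there is no substantive obstacle beyond correctly tracking the clearing of denominators and the reciprocation.
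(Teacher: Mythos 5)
Your proof is correct and is essentially the paper's own argument: the paper derives Proposition~\ref{Cadd} from Proposition~\ref{fe} by exactly this conjugation, writing $C_{m+n}(x,\alpha)=1/R_{m+n}(1/x,\alpha)$, substituting $R_m(1/x,\alpha)=1/C_m(x,\alpha)$ and $R_n(1/x,\alpha)=1/C_n(x,\alpha)$, and clearing denominators. Your key observation $R_n(1/x,\alpha)=1/C_n(x,\alpha)$, which follows from \eqref{CR} since $1/x$ is an involution under composition, is precisely the step the paper relies on.
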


\begin{rmk}
If $q$ is odd then $H(x,y)=(x+y)/(1+\alpha xy)$, which intriguingly coincides with Einstein's velocity-addition formula in special relativity when $\alpha=-1/c^2$ with $c$ being the speed of light in a vacuum \cite[\S 5]{E}.
\end{rmk}


\section{Trigonometry in characteristic $2$}

Analogues of $\cos x$ and $\sin x$ in $\F_q$ were introduced for odd $q$ in \cite{SOKP}, where they were used to define a Hartley number-theoretic transform.  Analogues of the tangent and inverse tangent function for odd $q$ were introduced in \cite{LS} and used in the definition of $C_n(x,\alpha)$ for odd $q$.
An analogue of $\cos x$ was introduced in \cite{LBS} for even $q$.
It appears that the literature does not contain an analogue of $\sin x$ or $\tan x$ over finite fields of even order.  We introduce such an analogue now, by combining our definition of $C_n(x,\alpha)$ for even $q$ from the previous section with the analogy with classical trigonometric functions that motivated the definition of $C_n(x,\alpha)$ in \cite{LS}.

For the rest of this section, $q$ is an even prime power, and we fix some $\zeta\in\F_{q^2}^*$ and some $\beta\in\F_{q^2}\setminus\F_q$ such that $\alpha:=\beta^2+\beta$ is in\/ $\F_q$.  Write $\bar\beta:=\beta^q$, so that as noted previously we have $\bar\beta=\beta+1$.  For any integer $k$ we define
\begin{align*}
\sin_{\zeta}(k)&:=\zeta^k+\zeta^{-k}
\\
\cos_{\zeta,\beta}(k)&:=\beta\zeta^k+\bar\beta\zeta^{-k}
\\
\tan_{\zeta,\beta}(k)&:=\frac{\sin_{\zeta}(k)}{\cos_{\zeta,\beta}(k)}.
\end{align*}
We note that what we call $\sin_{\zeta}(k)$ is called $\cos_\zeta(k)$ in \cite{LBS}; but we will show that the properties of this function resemble those of $\sin k$ rather than $\cos k$.

The crucial property of $\tan_{\zeta,\beta}(x)$ is as follows:

\begin{prop}\label{foo}
For $k\in\Z$ we have
\[
C_n(\tan_{\zeta,\beta}(k),\alpha) = \tan_{\zeta,\beta}(nk).
\]
\end{prop}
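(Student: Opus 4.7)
The plan is to apply the conjugacy from Corollary~\ref{7}, namely $C_n(x,\alpha)=\eta^{-1}\circ x^n\circ\eta$, so that the whole identity will collapse once we compute $\eta(\tan_{\zeta,\beta}(k))$. I expect this preimage under $\eta$ to be a clean monomial in $\zeta^k$, since $\tan_{\zeta,\beta}(k)$ was essentially designed to be the image of $\zeta^{2k}$ under $\eta^{-1}$ (mirroring how the odd-$q$ tangent corresponds to a point on the ``unit circle'').

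\smallskip
First I would substitute $t:=\tan_{\zeta,\beta}(k)=(\zeta^k+\zeta^{-k})/(\beta\zeta^k+\bar\beta\zeta^{-k})$ into the formula $\eta(x)=(\bar\beta x-1)/(\beta x-1)$. Clearing the common denominator $\beta\zeta^k+\bar\beta\zeta^{-k}$ from numerator and denominator gives
\[
\eta(t)=\frac{\bar\beta(\zeta^k+\zeta^{-k})-(\beta\zeta^k+\bar\beta\zeta^{-k})}{\beta(\zeta^k+\zeta^{-k})-(\beta\zeta^k+\bar\beta\zeta^{-k})}
=\frac{(\bar\beta-\beta)\zeta^k}{(\beta-\bar\beta)\zeta^{-k}}.
\]
In characteristic $2$ we have $\bar\beta-\beta=\bar\beta+\beta=(\beta+1)+\beta=1$, so both the linear factor in the numerator and in the denominator reduce to $1$, and
\[
\eta(\tan_{\zeta,\beta}(k))=\zeta^{2k}.
\]

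\smallskip
Once this identity is in hand, the rest is automatic: raising to the $n$-th power yields
\[
\eta(\tan_{\zeta,\beta}(k))^n=\zeta^{2nk}=\eta(\tan_{\zeta,\beta}(nk)),
\]
and applying $\eta^{-1}$ on the left gives $C_n(\tan_{\zeta,\beta}(k),\alpha)=\tan_{\zeta,\beta}(nk)$, as desired.

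\smallskip
The only step that requires any care is the char-$2$ cancellation $\bar\beta-\beta=1$; in every other respect this is a direct computation that parallels the odd-characteristic setting, where the analogous calculation would produce $\eta(\tan(k))=\zeta^{2k}$ via $\bar\beta-\beta=-2\beta$. So the ``hard part'' is really just being careful to verify that the denominators $\beta\zeta^k+\bar\beta\zeta^{-k}$ are nonzero (i.e.\ that $\tan_{\zeta,\beta}(k)$ is defined), which amounts to $\zeta^{2k}\ne\bar\beta/\beta$; in any case the identity is a formal identity in the rational function field $\F_{q^2}(\zeta)$ once $\zeta$ is treated as a variable, and then specializes to each admissible $\zeta$.
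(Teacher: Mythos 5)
Your proof is correct and follows essentially the same route as the paper: the paper's proof also rests on Corollary~\ref{7} together with the key identity $\tan_{\zeta,\beta}(k)=\eta^{-1}(\zeta^{2k})$ (with $u:=\zeta^k$, the paper writes $\tan_{\zeta,\beta}(k)=\eta^{-1}(u^2)$, which is exactly your computation $\eta(\tan_{\zeta,\beta}(k))=\zeta^{2k}$ read in the other direction), and then concludes by applying $x\mapsto x^n$ and $\eta^{-1}$. Only your parenthetical odd-$q$ aside is slightly off (there the ratio $(\bar\beta-\beta)/(\beta-\bar\beta)=-1$ gives $-\zeta^{2k}$, not $\zeta^{2k}$), but this plays no role in the characteristic-$2$ argument.
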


This result is the analogue for even $q$ of the equation used in \cite{LS} for odd $q$ to define $C_n(x,\alpha)$.

\begin{proof}
Writing $x:=\tan_{\zeta,\beta}(k)$ and $u:=\zeta^k$, we have
\[
x=\frac{u+u^{-1}}{\beta u+\bar\beta u^{-1}}=\eta^{-1}(u^2)
\]
and likewise
\[
\tan_{\zeta,\beta}(nk)=\eta^{-1}(u^{2n}).
\]
Thus
\begin{align*}
\eta(C_n(x,\alpha))&=\eta(x)^n \\
&= u^{2n} \\
&= \eta(\tan_{\zeta,\beta}(nk)),
\end{align*}
as desired.
\end{proof}

Our characteristic $2$ trigonometric functions satisfy the following easily verified properties for any integers $k,\ell$; here we suppress the subscripts for ease of typography:

\begin{align*}
\zeta^k &= \cos(k)+\bar\beta\sin(k) \\
1 &=\cos(k)^2 + \sin(k)\cos(k)+\alpha\sin(k)^2 \\
\sin(k+\ell)&=\sin(k)\cos(\ell)+\sin(\ell)\cos(k)+\sin(k)\sin(\ell) \\
\cos(k+\ell) &=\cos(k)\cos(\ell)+\alpha\sin(k)\sin(\ell) \\
\tan(k+\ell) &= \frac{\tan(k)+\tan(\ell)+\tan(k)\tan(\ell)}{1+\alpha\tan(k)\tan(\ell)}.
\end{align*}

The analogy with  trigonometric functions on $\mathbb R$ can be seen by viewing $\zeta$ as analogous to $e^i$ and $\bar\beta$ as analogous to $i$, with $\alpha$ analogous to $-1$; then the formulas for $\zeta^k$ and $\cos(k+\ell)$ become precise analogues of the familiar formulas.

\begin{rmk}
To complete the development of trigonometry over $\F_q$, we mention a new identity for odd $q$, using the notation from \cite{LS}:
\[
\tan_\zeta(x+y)=\frac{\tan_\zeta(x)+\tan_\zeta(y)}{1+i^2\tan_\zeta(x)\tan_\zeta(y)}.
\]
\end{rmk}


\section{Concluding Remarks}
In this note we showed that the tangent-Chebyshev rational functions introduced in \cite{LS} are conjugate to R\'edei functions, which explains the properties of these functions discovered in \cite{LS}, while also providing new properties.
Moreover, this enables one to deduce cryptographic information about tangent-Chebyshev maps from the known cryptographic results about R\'edei functions.
We also introduced a characteristic $2$ analogue of both tangent-Chebyshev rational functions and finite field trigonometry.
It seems plausible that one could similarly define tangent-Chebyshev rational maps on $\mathbb{Z}/n\mathbb{Z}$ by using the analogues of R\'edei functions on $\mathbb{Z}/n\mathbb{Z}$ introduced in \cite{NRup2,N2}, but we have not explored that possibility.



\end{document}